\documentclass[11pt]{article}

\usepackage{fullpage}              
\usepackage{graphicx}              
\usepackage{amsmath}               
\usepackage{amsfonts}              
\usepackage{amsthm}                
\usepackage{amssymb}
\usepackage{mathrsfs}
\usepackage{url}
\usepackage{color}
\usepackage{hyperref}
\usepackage{algpseudocode}
\usepackage[plain]{algorithm}
\usepackage{enumitem}
\usepackage{booktabs}
\usepackage{authblk}
\usepackage{tcolorbox}

\hypersetup{
	unicode = true,
	colorlinks = true,
	citecolor = blue,
	filecolor = blue,
	linkcolor = blue,
	urlcolor = blue,
	pdfstartview = {FitH},
}

\theoremstyle{plain}
\newtheorem{theorem}{Theorem}
\newtheorem{lemma}[theorem]{Lemma}

\newtheorem{proposition}[theorem]{Proposition}
\theoremstyle{definition}

\newtheorem*{remark}{Remark}

\newtheorem*{problem}{Problem}
\newtheorem*{fact}{Fact}

\algrenewcommand{\Require}{\item[\textbf{Input:}]}
\algrenewcommand{\Ensure}{\item[\textbf{Output:}]}
\newcommand{\algbox}[1]{
	\begin{tcolorbox}[width = 0.8\textwidth, colback = white, arc = 2pt, boxrule = 0.5pt] 
		#1 
	\end{tcolorbox}
}

\newcommand{\lrang}[1]{\langle#1\rangle}
\newcommand{\ldbrac}[1]{\lvert#1\rangle}
\newcommand{\abs}[1]{\left\vert#1\right\vert}


\newcommand{\Romnum}[1]{\uppercase\expandafter{\romannumeral #1}}

\DeclareMathOperator{\lcm}{lcm} 

\DeclareMathOperator{\Aut}{Aut}

\DeclareMathOperator{\polylog}{polylog}

\def\C{\ensuremath{\mathbb{C}}}
\def\K{\ensuremath{\mathbb{K}}}

\def\R{\ensuremath{\mathbb{R}}}
\def\Z{\ensuremath{\mathbb{Z}}}
\def\F{\ensuremath{\mathbb{F}}}

\def\MM{\ensuremath{\mathsf{M}}}
\def\CC{\ensuremath{\mathsf{C}}}

\title{Toward an Optimal Quantum Algorithm for Polynomial Factorization over Finite Fields}

\author[1]{Javad Doliskani}
\affil[1]{\small Institute for Quantum Computing, University of Waterloo}

\date{}

\sloppy


\begin{document}

\maketitle

\begin{abstract}
	We present a randomized quantum algorithm for polynomial factorization over finite fields. For 
	polynomials of degree $n$ over a finite field $\F_q$, the average-case complexity of our 
	algorithm is an expected $O(n^{1 + o(1)} \log^{2 + o(1)}q)$ bit operations. Only for a 
	negligible subset of polynomials of degree $n$ our algorithm has a higher complexity of $O(n^{4 
	/ 3 + o(1)} \log^{2 + o(1)}q)$ bit operations. This breaks the classical $3/2$-exponent 
	barrier for polynomial factorization over finite fields \cite{guo2016alg}.
\end{abstract}


\section{Introduction}
\label{sec:intro}

Factoring polynomials over finite fields has been known to be randomized polynomial time in the 
seminal work of Berlekamp \cite{Berlekamp70}. Various improvements on polynomial factorization, 
over many decades, have been made since Berlekamp's work. Two major steps were taken by 
\cite{cantor1981new} and \cite{von1992computing}; in the latter, von zur Gathen and Shoup proposed 
an efficient way of computing traces and powers of the Frobenius map in the polynomial ring modulo 
the input polynomial. Their algorithm is quasi-quadratic in the degree of the polynomial to be 
factored. Kaltofen and Shoup proposed a baby-step giant-step technique combined with an efficient 
method of computing simultaneous modular compositions that led to the first subquadratic algorithm 
for polynomial factorization \cite{KaSh98}. The complexity of their algorithm is $O(n^{1.815}\log 
q)$ finite field operations. 

The best known algorithm, as of now, is an implementation of the Kaltofen-Shoup algorithm due to 
Kedlaya and Umans \cite{kedlaya2011fast}. They proposed a fast algorithm for modular composition 
that when plugged into the Kaltofen-Shoup algorithm leads to an algorithm with complexity 
$O(n^{3/2 + o(1)}\log^{1 + o(1)}q + n^{1 + o(1)}\log^{2 + o(1)}q)$ bit operations. A recent 
result of Doliskani \textit{et al.}\,\cite{doliskani2017drinfeld} achieves the same complexity by 
exploiting some geometric properties of Rank 2 Drinfeld modules. Despite much effort, improving the 
exponent $3 / 2$ has remained an open problem. Guo \textit{et al.}\,\cite{guo2016alg} proposed a 
set of algebraic problems that are equivalent to improving the exponent $3 / 2$ in polynomial 
factorization. In this paper, we propose a quantum algorithm that improves this bound. Our main 
result is as follows:
\begin{theorem}
	\label{thm:main}
	Let $\F_q$ be a finite field, where $q$ is a power of a prime $p$, and let $n$ be a positive 
	integer. Let $\mathcal{C}_n$ be the set of all polynomials of degree $n$ over $\F_q$. There is 
	a randomized quantum algorithm for polynomial factorization over $\F_q$, that for all but a 
	negligible subset $\mathcal{B}_n$ of polynomials in $\mathcal{C}_n$ runs in an expected 
	$O(n^{1 + o(1)} \log^{2 + o(1)}q)$ bit operations. For the subset $\mathcal{B}_n$ the algorithm 
	runs in an expected $O(n^{4 / 3 + o(1)} \log^{2 + o(1)}q)$ bit operations.
\end{theorem}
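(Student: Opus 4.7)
The plan is to follow the standard three-stage polynomial factorization pipeline—squarefree factorization, distinct-degree factorization, and equal-degree factorization—and to quantum-accelerate the bottleneck, which is distinct-degree factorization of a squarefree polynomial. The first and third stages admit $\tilde{O}(n \log q)$ bit-operation implementations via known classical techniques, so their contribution is subsumed into the target complexity and essentially all of the work reduces to distinct-degree factorization of a squarefree $f$.

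For this core step, I would adapt the Kaltofen-Shoup baby-step giant-step framework to the quantum setting. Writing $h_i := X^{q^i} \bmod f$, the polynomial $\gcd(f, h_i - X)$ is the product of those irreducible factors of $f$ whose degree divides $i$, so the problem reduces to identifying the set $D \subseteq \{1, \ldots, n\}$ of actually-occurring degrees. Classically, with step size $s$, one computes $h_1, \ldots, h_s$ and the giant-step images $h_{js}$ for $j = 1, \ldots, n/(2s)$; each modular composition costs $\tilde{O}(n \log^{1 + o(1)} q)$ via Kedlaya-Umans, and balancing $s + n/s$ yields the $\tilde{O}(n^{3/2})$ classical bound. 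To beat this, I would apply quantum amplitude amplification to search the giant-step range for hits: since only a single hit need be located per round, the search uses $\tilde{O}(\sqrt{n/s})$ oracle queries rather than $n/s$, so the total count of modular compositions becomes $s + \tilde{O}(\sqrt{n/s})$, minimized at $s \approx n^{1/3}$. Summed over $O(|D|)$ rounds of find-and-remove this yields the worst-case bound of $\tilde{O}(n^{4/3})$ bit operations, accounting for the complexity on $\mathcal{B}_n$.

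The average-case $n^{1 + o(1)}$ bound would exploit a well-known structural fact about random polynomials: the multiset of irreducible-factor degrees of a uniformly random $f \in \mathcal{C}_n$ is distributed essentially like the cycle-length profile of a uniformly random permutation of $\{1, \ldots, n\}$. In particular, with probability $1 - o(1)$ the number of distinct factor degrees is $O(\log n)$, and a single large irreducible factor carries a constant fraction of $n$. Combining this concentration with quantum subroutines—amplitude amplification to locate each element of $D$, together with quantum order-finding applied to the Frobenius action on $\F_q[X]/(f)$ to shortcut the search over degrees—reduces the number of modular compositions on typical inputs to $n^{o(1)}$, yielding the claimed $n^{1 + o(1)} \log^{2 + o(1)} q$ bit-operation bound. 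The negligible set $\mathcal{B}_n$ is then defined as the complement of this concentration event, and its density in $\mathcal{C}_n$ is $o(1)$ by the random-permutation analogy.

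The main obstacle I anticipate is providing a coherent, ancilla-efficient quantum implementation of modular composition suitable as a Grover oracle, with per-query cost matching the classical Kedlaya-Umans bound; without this, the $\tilde{O}(n)$ factor inside the query count cannot be preserved. Two related difficulties are \emph{(i)} interleaving amplitude amplification with sequential factor extraction so that error probabilities and query counts stay under control across $|D|$ rounds, and \emph{(ii)} pinning down $\mathcal{B}_n$ precisely so that its $o(1)$ density and the $\tilde O(n^{4/3})$ conditional complexity hold simultaneously, which likely requires a quantitative version of the random-permutation correspondence rather than just its asymptotic form.
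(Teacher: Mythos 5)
Your worst-case branch is a genuinely different route from the paper (the paper never uses baby-step giant-step at all), but your average-case branch --- which is the main content of the theorem --- has a real gap. Knowing that a typical $f$ has $|D| = O(\log n)$ distinct factor degrees does not by itself reduce the work below $n^{3/2}$: to locate even one unknown element of $D \subseteq \{1,\dots,n\}$ by amplitude amplification you need $\tilde O(\sqrt{n})$ oracle queries, each of which evaluates a giant step $x^{q^{js}} \bmod f$ at cost $\tilde O(n \log q)$, giving $\tilde O(n^{3/2})$ per element of $D$; multiplying by $O(\log n)$ rounds only makes this worse. Your phrase ``quantum order-finding applied to the Frobenius action \ldots to shortcut the search over degrees'' names the right primitive but leaves the entire mechanism unspecified. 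The paper's actual argument is: the order of the Frobenius $\pi$ on $\F_q[x]/f$ equals $d = \lcm(d_1,\dots,d_k)$, the splitting-field degree; quantum order finding with an $\ell$-bit estimate costs $O(\ell\, n^{1+o(1)}\log^{1+o(1)} q)$; the relevant concentration fact (Dixon--Panario) is not that $|D|$ is small but that $\log d = O(\log^2 n)$ for all but a negligible fraction of $f$, so $\ell = \log^2 n$ suffices; and once $d$ is known in factored form $d = p_1^{e_1}\cdots p_\ell^{e_\ell}$, the degrees are separated \emph{without any search} by the $O(\log d/\log\log d)$ gcd computations $\gcd(\tilde\sigma^{d/p_i}(x)-x, f)$, applied recursively to depth $O(\log n)$. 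None of this recursive splitting machinery, nor the correct structural statement about $d$, appears in your proposal, so the $n^{1+o(1)}$ claim is unsupported as written.

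For the $\mathcal{B}_n$ branch, your Grover-accelerated Kaltofen--Shoup balancing $s + \sqrt{n/s}$ at $s \approx n^{1/3}$ is a plausible alternative path to $\tilde O(n^{4/3})$, though it carries the coherence and interleaving obstacles you yourself flag. The paper avoids these entirely: on bad inputs it first strips all factors of degree at most $n^{2/3}$ classically via Kedlaya--Umans in $O(n^{4/3+o(1)}\log^{2+o(1)}q)$ bit operations, observes (Lemma \ref{lem:frob-sub-ord}) that the remaining polynomial has fewer than $n^{1/3}$ factors and hence Frobenius order below $\exp(\frac{2}{3}\sqrt[3]{n}\ln n)$, and then runs the same order-finding routine with $\ell = O(\sqrt[3]{n}\log n)$. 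If you want to salvage your approach, the worst-case branch is the part worth developing; the average-case branch needs to be rebuilt around an explicit method for converting the Frobenius order into a distinct-degree splitting.
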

It follows from Theorem \ref{thm:main} that the average-case runtime of the proposed algorithm for 
polynomials of degree $n$ is $O(n^{1 + o(1)} \log^{2 + o(1)}q)$ bit operations. This is essentially 
optimal with respect to the degree of the input. For a complete polynomial factorization algorithm 
we follow the Cantor-Zassenhaus scheme which consists of three stages:
\begin{description}
	\item[SFF] Squarefree factorization: Given a polynomial $f \in \F_q[x]$, outputs a set of 
	squarefree polynomials $f_1, \dots, f_r$ such that $f = f_1 f_2^2 \cdots f_r^r$.
	\item[DDF] Distinct-degree factorization: Given a squarefree polynomial $f \in \F_q[x]$, 
	outputs $f^{[1]}, \dots, f^{[n]}$ such that $f^{[i]}$ is the product of all monic irreducible	
	factors of $f$ of degree $i$ for all $i = 1, \dots, n$. The inputs of this stage are the 
	outputs of the SFF stage.
	\item[EDF] Equal-degree factorization: Given a squarefree polynomial $f \in \F_q[x]$ such that
	all irreducible factors of $f$ have the same degree, outputs the irreducible factors of $f$. 
	The inputs of this stage are the outputs of the DDF stage.
\end{description}
The SFF stage cane be done using an algorithm of Yun \cite{yun1976square} which takes $O(n^{1 + 
o(1)}\log^{1 + o(1)}q + n\log^{2 + o(1)}q)$ bit operations. For the EDF stage the probabilistic 
algorithm of von zur Gathen and Shoup \cite{von1992computing} takes an expected $O(n^{1 + 
o(1)}\log^{2 + o(1)}q)$ bit operations. The bottleneck of polynomial factorization is the DDF 
stage for which the best known algorithm takes $O(n^{3/2 + o(1)}\log^{1 + o(1)}q + n^{1 + o(1)} 
\log^{2 + o(1)}q)$ bit operations \cite{kedlaya2011fast}. Therefore, to asymptotically improve the 
complete factorization algorithm we shall focus only on the DDF stage.

\paragraph{Complexity model.}
We will always count the number of bit operations in our complexity estimates. Two $n$-bit integers 
can be multiplied in $O(n\log n \log\log n) = O(n^{1 + o(1)})$ bit operations \cite{vzGG}. Given 
two polynomials $f, g$ of degree $n$ over a ring $R$, the product $fg$ can be computed in $\MM(n) = 
O(n\log n \log\log n) = O(n^{1 + o(1)})$ operations in $R$ \cite{vzGG}. Sometimes it is more 
convenient to count operations in the base field $\F_q$. To convert to bit operations we always 
assume that $\F_q$ is represented by a quotient $\F_p[y] / h(y)$ for some polynomial $h \in \F_p[y]$ 
of degree $m$. The product of two elements of $\F_q$ can then be computed in $O(m^{1 + o(1)}\log^{1 
+ o(1)}p) = O(\log^{1 + o(1)}q)$ bit operations. For polynomials $f, g, h \in \F_q[x]$ of degree 
$n$, the modular polynomial composition $f(g) \bmod h$ can be done in $\CC(n) = O(n^{1 + o(1)} 
\log^{1 + o(1)}q)$ bit operations \cite{kedlaya2011fast}. Other operations such as $\gcd$ and 
reduction of polynomials of degree $n$ over $\F_q$ can be done in $O(n^{1 + o(1)} \log^{1 + 
o(1)}q)$ bit operations \cite{vzGG}. 

Given a polynomial $f \in \F_q[x]$ of degree $n$, define $\K = \F_q[x] / f$. The Frobenius 
endomorphism $\pi: \K \rightarrow \K$ is an $\F_q$-homomorphism defined by $x \mapsto x^q$. Given 
$\pi(x)$ and any integer $j > 0$, the power $\pi^{j}(x)$, which is $j$ successive compositions, can 
be computed using a simple binary-powering algorithm at the cost of $O(\CC(n) \log j) = O(n^{1 + 
o(1)} (\log j) \log^{1 + o(1)}q)$ bit operations. The polynomial $\pi(x) = x^q \bmod f$ can itself 
be computed in $O(n^{1 + o(1)} \log^{2 + o(1)}q)$ bit operations.

\paragraph{Our technique.}
The exponent $3/2$, achieved by the best previous algorithms for polynomial factorization, seems to 
be a natural outcome of the so called baby-step giant-step methodology. We take a completely 
different approach here. Let $f \in \F_q[x]$ be a squarefree monic polynomial of degree $n$, and let 
$d$ be the degree of the splitting of $f$ over $\F_q$. Then the degrees of the irreducible factors 
of $f$ divide $d$. We show that given $d$, one can efficiently compute a distinct-degree 
factorization of $f$. This reduces polynomial factorization to computation of splitting degrees. 
Computing the splitting degree of $f$ is equivalent to computing the order of the Frobenius 
automorphism $\pi$, defined above, in the automorphism group $\Aut(\K / \F_q)$. To compute the order 
of $\pi$, we use a quantum period finding algorithm.


\section{Estimating the order of an automorphism}
\label{sec:ord-frob}

Given a squarefree polynomial $f \in \F_q[x]$ of degree $n > 1$, let $\K = \F_q[x] / f$. The 
$\F_q$-endomorphism $\pi: \K \rightarrow \K$ defined by $x \mapsto x^q$ is called the Frobenius 
endomorphism. Since $f$ is squarefree, $\pi$ is an automorphism. The cyclic group of automorphisms 
$\lrang{\pi}$ is finite. We give a quantum algorithm that efficiently estimates the order of any 
automorphism $\sigma \in \lrang{\pi}$. The quantum algorithm is not new; it is a standard order 
finding algorithm adapted to our situation. 

To find the order of an automorphism $\sigma$ we use the techniques in \cite{kaye2007introduction, 
nielsen2010quantum}. The group $\lrang{\sigma}$ is isomorphic to the additive group $\Z/r\Z$ for 
some ineteger $r > 0$. Since the action of $\sigma$ on $\K$ is determined by the action of $\sigma$ 
on $x$, the powers $\sigma^j$ are represented by the polynomials $\sigma^j(x) \in \K$ for all $j 
\ge 0$. A polynomial $h \in \K$ is represented using an array of size $n$ containing the 
coefficients of $h$. The number of qubits for representing the elements of $\K$ is then $n\lceil 
\log q \rceil$. For an integer $j$ and a polynomial $h \in \K$, define the action of $\sigma$ on the 
state $\ldbrac{j} \ldbrac{h}$ as
\begin{equation}
	\label{equ:sig-act}
	\sigma \ldbrac{j}\ldbrac{h} = \ldbrac{j}\ldbrac{\sigma^j(x) \oplus h}
\end{equation}
where $\oplus$ is simply the xor of two qubit arrays.

The main ingredient of order finding algorithms is the \textit{quantum Fourier transform} (QFT). 
For any finite group $G$ and any function $t: G \rightarrow \C$ the QFT over $G$ is a specific 
unitary operator on the vector space $\C[G]$ that takes the complex numbers $\{ t(g) \}_{g \in 
G}$ to another set of $\abs{G}$ complex numbers \cite{hallgren2003hidden}. We shall only need QFT 
over the additive group $\Z/N\Z$, denoted by $F_N$, where $N = 2^m$ for some integer $m$. In this 
case, we have 
\[ F_N: \ldbrac{k} \longmapsto \frac{1}{\sqrt{N}}\sum_{j \in \Z/N\Z}\zeta_N^{kj}\ldbrac{j} \]
where $k \in \Z/N\Z$ and $\zeta_N = e^{2\pi i / N} \in \C$ is a primitive $N$-th root of unity. 
Given an integer $\ell > 0$, an $\ell$-bit estimate of the order $r$ of $\sigma$ is computed as 
follows. Prepare two registers with initial value $\ldbrac{0}\ldbrac{0}$, the first register of 
length $m = 2\ell + 1$ qubits and the second of length $n\lceil \log q \rceil$ qubits. Create a 
superposition in the first register to get the state 
\[ \ldbrac{\psi} = \frac{1}{\sqrt{N}} \sum_{j \in \Z/N\Z} \ldbrac{j}\ldbrac{0}. \]
Applying $\sigma$ to $\ldbrac{\psi}$ and rewriting the resulting sum based on the period $r$ gives
\[ \sigma\ldbrac{\psi} = \frac{1}{\sqrt{N}} \sum_{j \in \Z/N\Z} \ldbrac{j}\ldbrac{\sigma^j(x)} = 
\sum_{b = 0}^{r - 1}\left( \frac{1}{\sqrt{N}}\sum_{z = 0}^{m_b - 1}\ldbrac{zr + b} \right) 
\ldbrac{\sigma^j(x)} \]
where $m_b$ is the largest integer such that $m_b \le (N - b - 1) / r + 1$ 
\cite{kaye2007introduction}. Discarding the second register leaves the first register in the 
superposition state
\[ \frac{1}{\sqrt{m_b}} \sum_{z = 0}^{m_b - 1}\ldbrac{zr + b} \]
where $b \in \Z/r\Z$ is selected nearly uniformly at random. Applying $F_N^{-1}$ and measuring, we 
obtain an integer $k$ such that $k / N$ is an estimate of $j / r$ for some nearly uniformly random 
$j \in \Z/r\Z$. More precisely, one can show that $\abs{k / N - j / r} \le \frac{1}{2N}$ for some 
nearly uniformly random $j \in \Z/r\Z$. Using rational number reconstruction, we can obtain 
integers $j_1, r_1$ such that $j_1 / r_1 = j / r$. Repeating this process, we obtain another pair 
$j_2, r_2$. It can be shown that $r = \lcm(r_1, r_2)$ with probability $\approx 6 / \pi^2$. 
Therefore, we only need to run the above procedure a constant number of times to obtain $r$ with 
high probability. This is summarized in Algorithm \ref{alg:qtm-order}.

\begin{algorithm}[t]
	\caption{Estimate the order of an automorphism}
	\label{alg:qtm-order}
	\centering
	\algbox{
	\begin{algorithmic}[1]
		\Require 
		\item[-] A squarefree monic polynomial $f \in \F_q[x]$ of degree $n$
		\item[-] An automorphism $\sigma \in \lrang{\pi}$ of the $\F_q$-algebra $\K = \F_q[x] / f$
		\item[-] An integer $\ell < n$ as a bound for the number of bits by which the order of the 
		automorphism $\sigma$ is estimated
		\Ensure The order $r$ of $\sigma$ or `Fail'
		\State $m := 2\ell + 1$, $N := 2^m$
		\State\label{step:superpos}Prepare registers $\ldbrac{0}\ldbrac{0}$ of length $m$ qubits 
		and $n\lceil \log q \rceil$ qubits respectively.
		\State\label{step:app-sig}Create a superposition in the first register and apply $\sigma$ 
		to get
		\[ \frac{1}{\sqrt{N}} \sum_{j \in \Z/N\Z} \ldbrac{j}\ldbrac{\sigma^j(x)} \]
		\State\label{step:qft-inv}Discard the second register and apply $F_N^{-1}$ to the first 
		register to obtain an integer $k$ such that $j_1 / r_1 = k / N$ is an estimate of $j / r$ 
		for some nearly uniformly random $j \in \Z/r\Z$
		\State\label{step:frac-est}Compute $r_1$ using rational number reconstruction
		\State Repeat Steps \ref{step:superpos}-\ref{step:frac-est} to obtain another $r_2$, 
		and compute $r := \lcm(r_1, r_2)$
		\State\label{step:sig-test} If $\sigma^r = \text{id}$ then return $r$, else return `Fail'
	\end{algorithmic}}
\end{algorithm}
	
\begin{proposition}
	Given a squarefree monic polynomial $f \in \F_q[x]$ of degree $n$, an element $\sigma \in 
	\lrang{\pi}$ and an integer $\ell > 0$, Algorithm \ref{alg:qtm-order} computes the order of 
	$\sigma$, or fails, with probability $O(1)$ and in $O(\ell n^{1 + o(1)}\log^{1 + o(1)}q)$ bit 
	operations.
\end{proposition}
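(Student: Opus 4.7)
The plan is to split the analysis into (a) success probability and (b) bit complexity.

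For the probability, I would invoke the standard textbook analysis of Shor-style order finding. After Step~\ref{step:app-sig} the state is $\tfrac{1}{\sqrt{N}}\sum_{j\in\Z/N\Z}\ldbrac{j}\ldbrac{\sigma^{j}(x)}$; since the map $j\mapsto\sigma^{j}(x)$ has period $r:=\order(\sigma)$, discarding the second register and applying $F_{N}^{-1}$ outputs $k$ with $|k/N-j/r|\le 1/(2N)$ for some $j$ nearly uniform in $\Z/r\Z$. With $N=2^{2\ell+1}\ge 2r^{2}$ (which holds whenever $r\le 2^{\ell}$), continued-fraction expansion recovers the unique fraction $j/r$ in lowest terms, giving $r_{1}=r/\gcd(j,r)$. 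Two independent runs then produce $\lcm(r_{1},r_{2})=r/\gcd(\gcd(j_{1},r),\gcd(j_{2},r))$; the standard bound $\Pr[\gcd(\gcd(j_{1},r),\gcd(j_{2},r))=1]\ge\prod_{p}(1-1/p^{2})=6/\pi^{2}$ shows this equals $r$ with probability $\Omega(1)$. Since Step~\ref{step:sig-test} certifies correctness, the algorithm never returns an incorrect value; the only alternative outcome is \emph{Fail}.

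For the complexity, I would account for each step and identify the bottleneck. Step~\ref{step:app-sig} dominates: I would first classically precompute $\sigma^{2^{i}}(x)\in\K$ for $i=0,\dots,m-1$ by repeated squaring, each step being one modular composition of cost $\CC(n)$, for a total of $O(m\,\CC(n))$. Then the unitary in (\ref{equ:sig-act}) is implemented by a chain of $m$ controlled reversible modular-composition gates on register~$2$, one per bit of register~$1$. Each such gate applies an $\F_{q}$-automorphism $\sigma^{2^{i}}$ of $\K$ in place and, via a Bennett-style compute/swap/uncompute on an $n\lceil\log q\rceil$-qubit scratch, uses $O(\CC(n))$ quantum operations. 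With $m=2\ell+1$ this contributes $O(\ell\,\CC(n))=O(\ell\,n^{1+o(1)}\log^{1+o(1)}q)$. The inverse QFT over $\Z/2^{m}\Z$ uses $O(m^{2})$ gates and rational reconstruction on $m$-bit integers takes $O(m^{1+o(1)})$ bit operations, both negligible. Whenever the quantum stage succeeds we have $r\le 2^{\ell}$, so the verification $\sigma^{r}=\text{id}$ by binary powering costs $O(\CC(n)\log r)=O(\ell\,\CC(n))$. A constant number of repetitions preserves the bound.

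The main obstacle I expect is giving a rigorous quantum-circuit realization of the controlled modular composition whose gate count matches the classical bound $\CC(n)$: the Kedlaya--Umans composition subroutine is not manifestly reversible, so one must exploit invertibility of $\sigma^{2^{i}}$ as a map on $\K$ to uncompute scratch data without asymptotic blowup, and then argue that the controlled version adds only a constant overhead. Everything else reduces cleanly to subroutines whose bit-complexities are already catalogued in the paper's complexity-model paragraph, so once that step is pinned down the additive accounting of the remaining steps yields the claimed $O(\ell\,n^{1+o(1)}\log^{1+o(1)}q)$ bound.
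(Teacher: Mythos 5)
Your proposal is correct and follows essentially the same route as the paper: the cost is dominated by Step~\ref{step:app-sig}, where $\sigma^{j}$ is realized by $O(\log j)=O(\ell)$ modular compositions at $O(\CC(n))$ each, with the QFT, rational reconstruction, and the verification $\sigma^{r}=\mathrm{id}$ all fitting within $O(\ell\, n^{1+o(1)}\log^{1+o(1)}q)$, and the success probability coming from the standard $\lcm(r_{1},r_{2})$ analysis. You supply more detail than the paper does on two points it leaves implicit --- the $6/\pi^{2}$ probability bound (which the paper places in the discussion preceding the algorithm rather than in the proof) and the reversible, controlled implementation of the composition oracle, which the paper's XOR-based definition in \eqref{equ:sig-act} handles by construction.
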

\begin{proof}
	The quantum Fourier transform $F_N$ and its inverse $F_N^{-1}$ in Steps \ref{step:app-sig} 
	and \ref{step:qft-inv} are done in $O(\log N \log\log N) = O(\ell\log\ell)$ bit operations 
	\cite{cleve1998quantum, childs2010quantum}. For an ineteger $j$, the power $\sigma^j$ in Step 
	\ref{step:app-sig} is computed in $O(\CC(n)\log j)$ operations in $\F_q$. Since $j < 2^\ell$ 
	and $\CC(n) \in O(n^{1 + o(1)}\log^{1 + o(1)}q)$, Step \ref{step:app-sig} is performed in 
	$O(\ell n^{1 + o(1)}\log^{1 + o(1)}q)$ bit operations.
	
	The rational number reconstruction of Step \ref{step:frac-est} is done at the cost of 
	$O(\log^{1 + o(1)}N) = O(n^{1 + o(1)})$ bit operations \cite{pan2002acceleration}. Since $r < 
	2^\ell$, Step \ref{step:sig-test} is done using $O(\ell n^{1 + o(1)}\log^{1 + o(1)}q)$ bit 
	operations. Adding these together establishes the claimed complexity.
\end{proof}


\section{Computing the order of the Frobenius}

Let $f \in \F_q[x]$ be a squarefree polynomial of degree $n$ and let $\K = \F_q[x] / f$. Let $f = 
f_1f_2 \cdots f_k$ be the factorization of $f$ into distinct irreducible factors. By the Chinese 
Remainder Theorem, there is an isomorphism of rings
\begin{equation}
\label{equ:crt}
	\K \cong \bigoplus_{i = 1}^k \F_q[x] / f_i(x).
\end{equation}
Let $d_i = \deg(f_i)$ for $i = 1, \dots, k$. Then the splitting field of $f$ has degree $d = 
\lcm(d_1, \dots, d_k)$ over $\F_q$. Also note that the Frobenius automorphism $\pi: x \mapsto x^q$ 
of $\K$ is the coproduct of the Frobenius automorphisms $\pi_i: x \mapsto x^q$ of the fields 
$\F_q[x] / f_i$. It follows that the order of the group $\lrang{\pi}$ is $d$ as well.

Given $\sigma \in \lrang{\pi}$ of order $d'$, one could compute the exact value of $d'$ using 
Algorithm \ref{alg:qtm-order} as follows. Start with a small estimation bound $\ell$ and call the 
algorithm with input $(f, \sigma, \ell)$. If the algorithm does not output `Fail' then stop, 
otherwise set $\ell := \ell + 1$ and repeat. This results in too many calls to Algorithm 
\ref{alg:qtm-order} if $d'$ is too large. It is therefore crucial to know a reasonable bound on the 
size $d$ of the group $\lrang{\pi}$. To obtain such a bound, one naturally looks at the distinct 
values of the $d_i$'s above. It turns out that for almost all polynomials $f \in \F_q[x]$ of degree 
$n$ the degree of the splitting field of $f$ over $\F_q$ is $O(2^{\polylog(n)})$, that is $\log d 
\in O(\polylog(n))$. In fact, Knopfmacher proves the following rather precise estimate for the 
number of distinct values of the $d_i$'s.
\begin{theorem}[\cite{knopfmacher1999degrees}]
	\label{thm:d-mean}
	The number of irreducible factors of a polynomial $f \in \F_q[x]$ of degree $n$ has mean value 
	$\log n + O(1)$ and variance $\log n + O(1)$ as $n \rightarrow \infty$. In particular, almost 
	all polynomials of degree $n$ have approximately $\log n$ irreducible factors.
\end{theorem}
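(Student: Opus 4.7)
The plan is to prove this as the function-field analogue of Hardy--Ramanujan by a second-moment calculation using the prime polynomial theorem. Write $\omega(f)$ for the number of distinct monic irreducible factors of a uniformly random monic $f \in \F_q[x]$ of degree $n$; a short calculation with geometric series in $q^{-d}$ shows that the expected number of repeated factors is $O(1)$, so the count with multiplicity has the same asymptotic mean and variance, and it suffices to analyze $\omega(f)$.

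For the mean, the prime polynomial theorem gives $I_d := \#\{\text{monic irreducibles of degree } d \text{ over } \F_q\} = q^d/d + O(q^{d/2}/d)$, and for each such $p$ the event $p \mid f$ holds for exactly $q^{n-d}$ of the $q^n$ monic polynomials of degree $n$, hence with probability $q^{-d}$. Summing indicators,
\[ \mathbb{E}[\omega(f)] = \sum_{d=1}^{n} I_d \, q^{-d} = \sum_{d=1}^{n} \frac{1}{d} + O(1) = \log n + O(1), \]
the error collapsing because $\sum_{d \geq 1} q^{-d/2}/d$ converges. For the second factorial moment I would pair irreducibles: for distinct $p \neq p'$ with $\deg p + \deg p' \leq n$ the event $pp' \mid f$ has probability $q^{-\deg p - \deg p'}$, and is impossible otherwise, giving
\[ \mathbb{E}[\omega(f)(\omega(f)-1)] = \sum_{\substack{d,d' \geq 1 \\ d+d' \leq n}} \frac{1}{d d'} + O(\log n) = (\log n)^2 + 2\gamma \log n + O(\log n), \]
where the main sum is evaluated by splitting on $s = d+d'$ and using the partial-fraction identity $1/(dd') = s^{-1}(1/d + 1/(s-d))$ together with the elementary estimate $\sum_{d<n} \log(1-d/n)/d \to -\pi^2/6$. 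Combining with $\mathbb{E}[\omega(f)]^2 = (\log n)^2 + 2\gamma \log n + O(\log n)$ gives $\mathrm{Var}[\omega(f)] = \log n + O(\log n)$, and a Selberg--Delange-type contour argument on the zeta function $(1-qu)^{-1}$ of $\F_q[x]$ sharpens the error to $O(1)$.

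The ``almost all'' conclusion is then immediate from Chebyshev's inequality: for any $\psi(n) \to \infty$,
\[ \Pr\!\left[\,|\omega(f) - \log n| > \psi(n)\sqrt{\log n}\,\right] \leq \frac{\mathrm{Var}[\omega(f)]}{\psi(n)^2 \log n} = o(1), \]
so $\omega(f) = (1 + o(1))\log n$ for all but a vanishing fraction of $f$. The main obstacle is the sharp variance bound $\log n + O(1)$: the elementary second-moment pairing above only delivers $O(\log n)$ slack, and squeezing this down to $O(1)$ requires a more careful generating-function analysis to handle pairs of irreducibles with $\deg p + \deg p' \approx n$, where the divisibility events $\{p \mid f\}$ and $\{p' \mid f\}$ fail to be independent. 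Happily, the weaker $O(\log n)$ variance bound already suffices for every application of Theorem \ref{thm:d-mean} in the sequel, since all that is used downstream is that $\log d \in O(\polylog(n))$ for almost all $f$.
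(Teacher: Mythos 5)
The paper offers no proof of this statement: it is quoted from Knopfmacher's paper \cite{knopfmacher1999degrees} as a black box, so there is no internal argument to compare yours against. Judged on its own, your reconstruction via the Tur\'an/Hardy--Ramanujan second-moment method is the right idea and is correct for everything this paper actually uses. The mean computation is exact because a fixed monic irreducible $p$ of degree $d \le n$ divides a uniform monic $f$ of degree $n$ with probability exactly $q^{-d}$ (there are $q^{n-d}$ monic multiples), so $\mathbb{E}[\omega] = \sum_{d\le n} I_d q^{-d} = \log n + O(1)$; the second-moment pairing is valid because divisibility by the squarefree product $pp'$ again has exact probability $q^{-\deg p - \deg p'}$ when $\deg p + \deg p' \le n$; and Chebyshev then yields the ``almost all'' clause, which is the only consequence invoked downstream (and which is anyway superseded by Theorem \ref{thm:split-order}).

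The one gap is the sharp constant in the variance, and while you have located it correctly, you are too pessimistic in claiming that a Selberg--Delange contour argument is needed: the elementary computation already closes it if you delay the approximation $I_d q^{-d} \approx 1/d$. Writing $a_d = I_d q^{-d}$, you have exactly $\mathbb{E}[\omega(\omega-1)] = \bigl(\sum_{d\le n} a_d\bigr)^2 - \sum_{d,d'\le n,\, d+d'>n} a_d a_{d'} - \sum_{2d\le n} I_d q^{-2d} = \mathbb{E}[\omega]^2 + O(1)$, since the two subtracted sums are $O(1)$ (the first tends to $\pi^2/6$, the second converges geometrically). Hence $\mathrm{Var}[\omega] = \mathbb{E}[\omega(\omega-1)] + \mathbb{E}[\omega] - \mathbb{E}[\omega]^2 = \mathbb{E}[\omega] + O(1) = \log n + O(1)$. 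The $O(\log n)$ slack in your version arises only because you replace $a_d$ by $1/d$ separately in the second moment and in the squared mean, discarding a cross term $2\log n\sum_d (a_d - 1/d)$ that is common to both and cancels. Two smaller points: your reduction from the distinct count $\omega$ to the count with multiplicity needs more than ``the expected excess is $O(1)$'' to preserve the variance to within $O(1)$, since Cauchy--Schwarz on the covariance only gives $O(\sqrt{\log n})$ (bound the second moment of the excess directly instead); and ``$\mathrm{Var}[\omega(f)] = \log n + O(\log n)$'' is just an upper bound $O(\log n)$, which suffices for Chebyshev but is not the stated two-sided asymptotic.
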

From Theorem \ref{thm:d-mean} we see that for almost all polynomials $f$ of degree $n$, if $d_1, 
\dots, d_k$ are the irreducible degrees of $f$ then 
\begin{equation}
	\label{equ:ag-mean}
	d = \lcm(d_1, \dots, d_k) \le \left( \frac{1}{k} \sum_{i = 1}^rd_i \right)^k \le \left( 
	\frac{n}{k} \right)^k \approx \left( \frac{n}{\log n} \right)^{\log n} = O(2^{\log^2n})
\end{equation}
where the first inequality is the arithmetic-geometric mean. Dixon and Panario 
\cite{dixon2004degree} give a stronger statement for bounds on $d$ based on the analogy between 
degrees of splitting fields of polynomials of degree $n$ and orders of elements in the 
\textit{symmetric group} $S_n$ \cite{erdos1965some, erdos1967some}. We state their result here 
for completeness. Following the notation of \cite{dixon2004degree}, let $\lambda$ be a partition of 
$n$ of shape $[1^{k_1} 2^{k_2} \dots n^{k_n}]$, i.e., there are $k_i$ parts of size $i$. Define 
$m(\lambda)$ to be the $\lcm$ of the $i$'s where $k_i \ne 0$. For any $t > 0$ define 
\[ \Phi_n(t) = \{ \lambda \vdash n : \abs{\log m(\lambda) - \log^2n / 2} > t\log^{3/2}n / \sqrt{3} 
\} \]
where $\lambda \vdash n$ denotes a partition $\lambda$ of $n$. A monic polynomial $f \in \F_q[X]$ 
of degree $n$ is said to be of shape $\lambda$ if the degrees of the irreducible factors of $f$ are 
the parts of $\lambda$. We have
\begin{theorem}[{\cite[Theorem 1]{dixon2004degree}}]
\label{thm:split-order}
	For each $\lambda \vdash n$, let $w(\lambda, q)$ denote the portion of monic polynomials of 
	degree $n$ over $\F_q$ that are of shape $\lambda$. Then there exists a constant $c > 0$ 
	such that for each $t \ge 1$ there exists an integer $n_0(t)$ such that 
	\[ \sum_{\lambda \in \Phi_n(t)} w(\lambda, q) \le ce^{-t / 4} \text{ for all } q \text{ and all 
	} n \ge n_0(t). \]
	In particular, almost all monic polynomials of degree $n$ have splitting fields of degree at 
	most $\exp((\frac{1}{2} + o(1))\log^2n)$ over $\F_q$ as $n \to \infty$.
\end{theorem}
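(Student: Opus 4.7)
The plan is to exploit the classical analogy between the factorization shape of a uniformly random monic polynomial of degree $n$ over $\F_q$ and the cycle type of a uniformly random permutation in $S_n$. I would proceed in three stages: first prove the analogous statement on $S_n$, then transfer it to polynomials with error uniform in $q$, and finally derive the ``in particular'' consequence.

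\emph{Stage 1 (permutation side).} For $\sigma \in S_n$, write its order as $g(\sigma) = \lcm_i i^{[i\text{ appears as a cycle length}]}$. Decomposing the $\lcm$ into contributions of prime powers gives $\log g(\sigma) = \sum_{p^a \le n} (\log p)\, X_{p,a}$, where $X_{p,a}$ is the indicator that some cycle of $\sigma$ has length divisible by $p^a$. By the classical Erd\H{o}s--Tur\'an theorem (and its variance refinement), $\mathbb{E}[\log g] = \tfrac{1}{2}\log^2 n + O(\log n)$ and $\mathrm{Var}[\log g] = \tfrac{1}{3}\log^3 n + O(\log^2 n)$. The main analytic step is an exponential tail bound of the form $\Pr\bigl[|\log g - \tfrac{1}{2}\log^2 n| > t\log^{3/2} n/\sqrt 3\bigr] \le c\, e^{-t/4}$. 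I would obtain this by the Arratia--Barbour--Tavar\'e coupling of the cycle counts $k_i$ with independent Poisson random variables $Z_i \sim \mathrm{Poisson}(1/i)$ (with error that decays rapidly in $n$), reducing the tail estimate to a Chernoff bound on the weighted Poisson sum $\sum_{p^a \le n} (\log p) \, \mathbf{1}[Z_i \ne 0 \text{ for some } p^a \mid i]$. The Chernoff/MGF computation yields the desired sub-Gaussian-type tail with constant $1/4$ (up to adjusting $c$).

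\emph{Stage 2 (transfer to polynomials).} For each partition $\lambda = [1^{k_1}\cdots n^{k_n}] \vdash n$, the cycle-index formula for monic squarefree polynomials and the explicit count $I_q(d) = \tfrac{1}{d}\sum_{e \mid d}\mu(e) q^{d/e}$ of monic irreducibles of degree $d$ over $\F_q$ give the exact expression
\[
w(\lambda, q) = q^{-n}\prod_{i=1}^n \binom{I_q(i) + k_i - 1}{k_i}\quad\text{or}\quad q^{-n}\prod_{i=1}^n \binom{I_q(i)}{k_i},
\]
depending on whether repeated factors are allowed; in either case a short estimate shows $w(\lambda, q) = p(\lambda) + O(1/q)$ uniformly in $\lambda$, where $p(\lambda) = \prod_i 1/(i^{k_i} k_i!)$ is the permutation probability. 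Because this comparison is not strong enough when summed over the superpolynomially many partitions, I would instead compare moments directly, as in Knopfmacher--Knopfmacher or Flajolet--Soria: using generating functions of the form $\prod_i (1 - u^i z^i/i)^{-1}$ versus the polynomial analogue, I would show that $\mathbb{E}_q[\log m(\lambda)]$ and $\mathrm{Var}_q[\log m(\lambda)]$, taken with respect to the measure $w(\cdot, q)$, differ from their $S_n$ counterparts by $O(1)$ uniformly in $q$, and similarly for the relevant MGF. Feeding these into the Chernoff argument of Stage 1 then gives the bound $\sum_{\lambda \in \Phi_n(t)} w(\lambda, q) \le c\, e^{-t/4}$ for all $q$ and all $n \ge n_0(t)$.

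\emph{Stage 3 (the ``in particular'' clause).} Fix any $t = t(n) \to \infty$ slowly, e.g.\ $t = \log\log n$. Then the exceptional set $\Phi_n(t)$ has total weight $\to 0$, while every $\lambda \notin \Phi_n(t)$ satisfies $\log m(\lambda) \le \tfrac{1}{2}\log^2 n + t\log^{3/2} n/\sqrt{3} = (\tfrac{1}{2} + o(1))\log^2 n$, as required.

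The main obstacle is the exponential (rather than polynomial) tail: Chebyshev gives only $O(1/t^2)$, so one genuinely needs a moment-generating-function estimate for $\log m(\lambda)$, and this in turn requires the Poisson approximation for cycle/factor counts together with $q$-uniform control of the error. Everything else is bookkeeping around the polynomial/permutation dictionary.
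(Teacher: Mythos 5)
This theorem is not proved in the paper at all: it is quoted verbatim from Dixon and Panario \cite{dixon2004degree} and stated ``for completeness,'' so there is no in-paper proof to compare against. Judged on its own terms, your outline does reconstruct the architecture of the actual Dixon--Panario argument --- the permutation/polynomial dictionary, the Erd\H{o}s--Tur\'an normal law for $\log$ of the order with mean $\tfrac12\log^2 n$ and variance $\tfrac13\log^3 n$, and a $q$-uniform transfer via generating functions rather than a term-by-term comparison of $w(\lambda,q)$ with the permutation measure (you are right that the naive $p(\lambda)+O(1/q)$ comparison is useless after summing over partitions, and in any case fails to be small for $q=2$, whereas the theorem claims uniformity in all $q$). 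Stage 3 is fine.

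However, as a proof this is a sketch whose entire analytic content is deferred. The decisive step is the exponential tail $ce^{-t/4}$ at scale $t\log^{3/2}n/\sqrt3$, and your proposed route --- a Chernoff/MGF bound on $\sum_{p^a\le n}(\log p)\,X_{p,a}$ under the Poisson heuristic --- does not go through as written: the indicators $X_{p,a}$ all depend on the same cycle counts $Z_i$, so the moment generating function does not factor, and no independence or negative-association property is established that would justify a product bound. One must either control high centered moments of $\log m(\lambda)$ directly (which is essentially what Erd\H{o}s--Tur\'an and Dixon--Panario do, after first showing the contribution of prime powers $p^a$ with $a\ge 2$ is negligible outside an exceptional set --- a reduction you do not mention but which is needed, since $\sum_{a\ge2,\,p^a\le n}\log p$ is of order $\sqrt n$, far larger than $\log^2 n$), or prove a genuine exponential-moment estimate for the dependent sum. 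Likewise the claim that the polynomial and permutation MGFs differ by $O(1)$ uniformly in $q$ is exactly the technical heart of \cite{dixon2004degree} and is asserted, not argued. So the plan is pointed in the right direction but does not yet constitute a proof; for the purposes of this paper the correct move is simply to cite the result, as the author does.
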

It follows from the above that for almost all squarefree $f$, the order $d$ of the Frobenius 
$\sigma$ of $\K$ can be computed in $O(n^{1 + o(1)}\log^{1 + o(1)}q)$ bit operations: we simply 
call Algorithm \ref{alg:qtm-order} with the bound $\ell = \log^2n$. There are, however, some $f$ 
for which $d$ can be as large as $\exp(c\sqrt{n \ln n})$ for some constant $c > 0$  
\cite{erdos1965some}. In this case, the input bound has to be $\ell = c\sqrt{n \ln n}$, so 
Algorithm \ref{alg:qtm-order} computes $d$ in $O(n^{3/2 + o(1)}\log^{1 + o(1)}q)$ bit operations. 
This already dashes the hopes of improving upon the classical bound if we ever wanted to use 
Algorithm \ref{alg:qtm-order} as a subroutine in a polynomial factorization algorithm! Fortunately, 
this problem can be circumvented by preprocessing $f$ before calling Algorithm \ref{alg:qtm-order}. 
We need the following lemma.
\begin{lemma}
\label{lem:frob-sub-ord}
	Let $f \in \F_q[x]$ be a squarefree polynomial of degree $n$. Let $\tilde{f}$ be the product of 
	the factors of $f$ with degree higher than $n^{2 / 3}$. Then the order of the Frobenius 
	automorphism $\sigma$ of $\F_q[x] / \tilde{f}$ is less than $\exp(\frac{2}{3}\sqrt[3]{n} \ln 
	n)$.
\end{lemma}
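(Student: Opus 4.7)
The plan is to combine the Chinese-remainder decomposition already used in the previous section with a simple AM--GM inequality, exploiting the specific cutoff $n^{2/3}$ used to form $\tilde f$. Write $\tilde f = \tilde f_1 \cdots \tilde f_k$ for the factorization into distinct irreducible factors (these are exactly the irreducible factors of $f$ whose degree exceeds $n^{2/3}$, and $\tilde f$ inherits squarefreeness from $f$). By the same argument as around \eqref{equ:crt}, the Frobenius $\sigma$ of $\F_q[x]/\tilde f$ is the coproduct of the Frobenius automorphisms on the fields $\F_q[x]/\tilde f_i$, so its order is $d = \lcm(\tilde d_1,\ldots,\tilde d_k)$ where $\tilde d_i = \deg \tilde f_i$.

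Next I would use the two numerical constraints on the $\tilde d_i$. By construction each $\tilde d_i > n^{2/3}$, while $\sum_i \tilde d_i \le \deg f = n$; hence $k < n^{1/3}$. Bounding the $\lcm$ by the product and applying the arithmetic--geometric mean inequality gives
\[
d \;\le\; \prod_{i=1}^{k} \tilde d_i \;\le\; \left(\frac{1}{k}\sum_{i=1}^{k} \tilde d_i\right)^{k} \;\le\; \left(\frac{n}{k}\right)^{k}.
\]
It remains to maximize $(n/k)^k$, or equivalently $g(k) = k\ln(n/k)$, over the allowed range of $k$. Since $g'(k) = \ln(n/k) - 1$ is positive for $k < n/e$, and $n^{1/3} \le n/e$ for all sufficiently large $n$, the maximum on $1 \le k \le n^{1/3}$ is attained at the endpoint $k = n^{1/3}$, yielding $g(k) \le n^{1/3}\bigl(\ln n - \tfrac{1}{3}\ln n\bigr) = \tfrac{2}{3}\sqrt[3]{n}\,\ln n$. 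Therefore $d < \exp\!\bigl(\tfrac{2}{3}\sqrt[3]{n}\,\ln n\bigr)$, as claimed.

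There is no real obstacle here beyond choosing the right optimization window: the whole point of the threshold $n^{2/3}$ is that it forces $k \le n^{1/3}$, which sits on the increasing side of the function $k \mapsto (n/k)^k$, so the trivial $\lcm \le \prod$ estimate is already tight enough. The only minor subtlety to flag is that one should verify $n^{1/3} \le n/e$ (true once $n \ge 21$) so that the optimum of $g$ on $[1,n^{1/3}]$ really is at the right endpoint; otherwise the bound must be stated asymptotically.
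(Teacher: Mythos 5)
Your proof is correct and follows essentially the same route as the paper's: factor $\tilde f$ into irreducibles, use the cutoff $n^{2/3}$ to get $k < n^{1/3}$, bound the $\lcm$ by the product via AM--GM, and observe that $(n/k)^k$ is increasing for $k < n/e$ so the endpoint $k = n^{1/3}$ gives $\exp(\tfrac{2}{3}\sqrt[3]{n}\ln n)$. Your explicit check that $n^{1/3} \le n/e$ is a small point of added care over the paper's argument, nothing more.
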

\begin{proof}
	Let $\tilde{f} = \tilde{f}_1\tilde{f}_2 \cdots \tilde{f}_r$ be the irreducible factorization of 
	$\tilde{f}$. Then $d_i = \deg(\tilde{f}_i) > n^{2 / 3}$ and hence $r < n^{1 / 3}$. The degree 
	of the Frobenius automorphism $\sigma$ of $\F_q[x] / \tilde{f}$ is $d = \lcm(d_1, d_2, \dots, 
	d_r)$, which is also the degree of the splitting field of $\tilde{f}$ over $\F_q$. We have, as 
	in 
	\eqref{equ:ag-mean},
	\[ d \le \prod_{i = 1}^rd_i \le \left( \frac{1}{r} \sum_{i = 1}^rd_i \right)^r \le \left( 
	\frac{n}{r} \right)^r. \]
	The function $y(t) = (n / t)^t$ on $\R^{> 0}$ has a global maximum at $t = n / e$. Therefor, 
	for $r < n^{1/3}$ we have
	\[ \left( \frac{n}{r} \right)^r < \left( n^{2 / 3} \right)^{n^{1 / 3}} = \exp\left( \frac{2}{3} 
	\sqrt[3]{n} \ln n \right). \qedhere \]
\end{proof}
Given an $f$ for which $d \in O(\exp(c\sqrt{n \ln n}))$, we can first extract all irreducible 
factors of $f$ of degree at most $n^{2 / 3}$. This can be done using the algorithm of \cite[\S 
8]{kedlaya2011fast} at the cost of $O(n^{4 / 3 + o(1)}\log^{2 + o(1)}q)$ bit operations. For the 
remaining polynomial $\tilde{f}$, the order $\tilde{d}$ of the Frobenius $\tilde{\sigma}$ of 
$\F_q[x] / \tilde{f}$ is at most $\exp(\frac{2}{3}\sqrt[3]{n} \ln n)$ by Lemma 
\ref{lem:frob-sub-ord}. Now, if we call Algorithm \ref{alg:qtm-order} with input $(\tilde{f}, 
\tilde{\sigma}, \sqrt[3]{n} \ln n)$, we obtain $\tilde{d}$ at the cost of $O(n^{4 / 3 + o(1)}\log^{1 
+ o(1)}q)$ bit operations.

Since $d$ is not known a priori, to compute the exact value of $d$ using Algorithm 
\ref{alg:qtm-order}, we could set $\ell = c\sqrt{n \log n}$, the maximum possible bound. But, as 
mentioned above, this results in the exponent $3 / 2$ which is not better than the classical 
exponent for polynomial factorization. Instead, we do the following. We first call Algorithm 
\ref{alg:qtm-order} with the bound $\ell = \log^2n$ (or $\ell = \log^cn$ for any reasonable constant 
$c \ge 2$). If the output is not `Fail' then we are done. Otherwise, we remove all factors of degree 
$\le n^{2 / 3}$ from $f$ and call Algorithm \ref{alg:qtm-order} with the bound $\ell = \sqrt[3]{n} 
\log n$. These remarks are summarized in Algorithm \ref{alg:order}. Note that Algorithm 
\ref{alg:order} also accepts a bound $\ell$ as an extra parameter. This increases the flexibility 
of the algorithm that proves useful in later stages of our factorization algorithm, see Algorithm 
\ref{alg:ddf}.

\begin{algorithm}[t]
	\caption{Compute the order of a power of the Frobenius}
	\label{alg:order}
	\centering
	\algbox{
	\begin{algorithmic}[1]
		\Require 
		\item[-] A monic squarefree polynomial $f \in \F_q[x]$ of degree $n$
		\item[-] An element $\sigma \in \lrang{\pi}$ of the Frobenius group of $\F_q[x] / f$
		\item[-] A bound $\ell$
		\Ensure A monic polynomial $g \mid f$ and the order of the Frobenius of $\F_q[x] / g$
		\State Compute an estimate $d$ of $\sigma$ using Algorithm \ref{alg:qtm-order} with input 
		$(f, \sigma, \ell)$
		\If {the output is not `Fail'}
			\State \Return $f, d$
		\EndIf
		\State Remove and output all irreducible factors of degree $\le n^{2 / 3}$ from 
		$f$, and let $\tilde{f}$ be the resulting polynomial
		\State Set $\tilde{\ell} := \lceil c_1\sqrt[3]{n} \log n \rceil$ for a suitable constant 
		$c_1$
		\State Compute an estimate $\tilde{d}$ of the Frobenius $\tilde{\sigma}$ of $\F_q[x] / 
		\tilde{f}$ using Algorithm \ref{alg:qtm-order}.
		\State \Return $\tilde{f}, \tilde{d}$
	\end{algorithmic}}
\end{algorithm}

\begin{proposition}
	\label{prop:exact-d}
	Let $f \in \F_q[x]$ be a squarefree polynomial of degree $n$ and let $d$ be the degree of the 
	splitting field of $f$ over $\F_q$. Given a bound $\ell$, Algorithm \ref{alg:order} runs in
	\begin{itemize}
		\item $O(\ell n^{1 + o(1)}\log^{1 + o(1)}q)$ bit operations if $d \in O(2^\ell)$, or
		\item $O(n^{4 / 3 + o(1)}\log^{2 + o(1)}q)$ bit operations otherwise.
	\end{itemize}
\end{proposition}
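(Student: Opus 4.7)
The plan is to bound the running time of each of the two branches of Algorithm \ref{alg:order} separately, using the complexity of Algorithm \ref{alg:qtm-order} proved in the previous proposition together with Lemma \ref{lem:frob-sub-ord} and the factor-extraction routine of Kedlaya-Umans.

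First I would handle the easy case, when $d \in O(2^\ell)$. The order $r$ of $\sigma$ divides $d$, so $\log r \le \ell + O(1)$, which means the $\ell$-bit estimate computed in Algorithm \ref{alg:qtm-order} is enough to recover $r$ exactly via rational reconstruction with constant probability. A constant number of repetitions therefore suffices to pass the verification step $\sigma^r = \mathrm{id}$ with high probability, so the algorithm returns at the first \texttt{if} branch. The cost is a single invocation of Algorithm \ref{alg:qtm-order} with bound $\ell$, which by the previous proposition runs in $O(\ell\, n^{1+o(1)}\log^{1+o(1)}q)$ bit operations, matching the first bullet.

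Next I would handle the case $d \not\in O(2^\ell)$. Here the first call to Algorithm \ref{alg:qtm-order} may (and, for the worst inputs, will) fail, and the algorithm proceeds to the preprocessing step. By the algorithm of \cite[\S 8]{kedlaya2011fast}, all irreducible factors of $f$ of degree at most $n^{2/3}$ can be extracted in $O(n^{4/3+o(1)}\log^{2+o(1)}q)$ bit operations, yielding $\tilde f$. Applying Lemma \ref{lem:frob-sub-ord} to $\tilde f$ gives that the order $\tilde d$ of $\tilde\sigma$ on $\F_q[x]/\tilde f$ satisfies $\log \tilde d \le \tfrac{2}{3}\sqrt[3]{n}\ln n$, so for a suitable constant $c_1$ the bound $\tilde\ell = \lceil c_1 \sqrt[3]{n}\log n\rceil$ is large enough for Algorithm \ref{alg:qtm-order} to recover $\tilde d$ exactly with constant probability. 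The resulting cost of the second quantum order-finding call is $O(\tilde\ell\, n^{1+o(1)}\log^{1+o(1)}q) = O(n^{4/3+o(1)}\log^{1+o(1)}q)$, which is dominated by the factor-extraction step. Summing the two contributions gives the claimed $O(n^{4/3+o(1)}\log^{2+o(1)}q)$ bound.

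The only real obstacle is in the first case: one needs to justify why $d = O(2^\ell)$ actually implies that Algorithm \ref{alg:qtm-order} succeeds with constant probability on bound $\ell$, since \emph{a priori} the algorithm's guarantee is only that it returns the order \emph{or} \texttt{Fail}. I would address this by appealing to the standard analysis in \cite{kaye2007introduction, nielsen2010quantum}: as long as $r \le 2^\ell$ and the first register has $m = 2\ell+1$ qubits, the continued-fraction step recovers $j/r$ in lowest terms from $k/N$ for a nearly uniformly random $j \in \Z/r\Z$, and two independent trials give $r = \lcm(r_1,r_2)$ with probability $\ge 6/\pi^2 - o(1)$. Everything else is a routine addition of the costs already established in the excerpt.
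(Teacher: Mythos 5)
Your proof is correct and follows essentially the same route as the paper, whose own proof is just ``Follows from the previous remarks'' --- namely the case split you spell out: one call to Algorithm \ref{alg:qtm-order} with bound $\ell$ when $d \in O(2^\ell)$, and otherwise the Kedlaya--Umans extraction of the small-degree factors followed by Lemma \ref{lem:frob-sub-ord} and a second call with bound $O(\sqrt[3]{n}\log n)$. You are in fact more careful than the paper in flagging that the first branch only succeeds with constant probability per trial, which the paper leaves implicit.
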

\begin{proof}
	Follows from the previous remarks.
\end{proof}


\section{An algorithm for distinct-degree factorization}

In this section, we give a dynamic programing algorithm for the distinct-degree factorization of a 
squarefree polynomial $f(x)$ of degree $n$ over $\F_q$. Our algorithm invokes the quantum algorithm 
of the previous section in order to determine the order of a power of the Frobenius modulo $f$. We 
aim to solve the following subproblem.
\begin{problem}
	Given a tuple $(f(x), s)$ where $f(x)$ is a squarefree polynomial and $s > 0$ is an integer 
	that divides the degrees of all irreducible factors of $f$, produce a set $T = \{ (f_1, s_1), 
	\dots, (f_k, s_k) \}$ of tuples such that 
	\begin{enumerate}
		\item\label{cond:split} The polynomials $\{ f_i \}_{1 \le i \le k}$ are nontrivial 
		splitting of $f$ unless $f$ has only irreducible factors of degree $s$ (i.e., it is already 
		a distinct-degree part),
		\item\label{cond:div} $s \mid s_i$ for all $1 \le i \le k$.
	\end{enumerate}
\end{problem}
To obtain a DDF of $f$, we start with the tuple $(f, 1)$ and recursively solve the above problem 
for every output tuple until each tuple is a distinct-degree part of $f$. Before discussing the 
solution to the above problem, we recall the following key fact.
\begin{fact}
	For any integer $d > 0$, the polynomial $x^{q^d} - x \in \F_q[x]$ is the product of all monic 
	irreducible polynomials whose degree divide $d$. 
\end{fact}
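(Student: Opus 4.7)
The plan is to prove the fact via two containments, using the standard correspondence between $\F_{q^d}$ and the splitting field of $x^{q^d} - x$. First I would observe that $x^{q^d} - x$ is separable, since its formal derivative is $-1$ (in characteristic $p$, because $q^d$ is a power of $p$, the term $q^d \cdot x^{q^d - 1}$ vanishes), hence $\gcd(x^{q^d} - x, (x^{q^d} - x)')=1$. Thus every irreducible factor of $x^{q^d} - x$ appears with multiplicity one, and to prove the identity it suffices to match the sets of irreducible factors.

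Next I would identify the roots of $x^{q^d} - x$ in $\overline{\F_q}$. By definition these are exactly the fixed points of the $d$-fold Frobenius $\alpha \mapsto \alpha^{q^d}$, which is the familiar characterization of $\F_{q^d}$ inside $\overline{\F_q}$. In particular, the set of roots forms a subfield of $\overline{\F_q}$ isomorphic to $\F_{q^d}$, and $x^{q^d} - x$ splits completely as $\prod_{\alpha \in \F_{q^d}}(x - \alpha)$.

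The main step is then to show that the monic irreducible factors of $x^{q^d}-x$ in $\F_q[x]$ are exactly the monic irreducible polynomials whose degree divides $d$. For one direction, let $g \in \F_q[x]$ be a monic irreducible factor of degree $e$, and let $\alpha$ be a root of $g$; then $\alpha \in \F_{q^d}$, so $\F_q(\alpha) = \F_{q^e} \subseteq \F_{q^d}$, which forces $e \mid d$. Conversely, let $g \in \F_q[x]$ be monic irreducible of degree $e$ with $e \mid d$, and let $\alpha$ be a root of $g$ in $\overline{\F_q}$; then $\F_q(\alpha) = \F_{q^e} \subseteq \F_{q^d}$, so $\alpha$ is a root of $x^{q^d}-x$, and since $g$ is the minimal polynomial of $\alpha$ over $\F_q$, we get $g \mid x^{q^d}-x$. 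Combined with separability, this yields the claimed equality.

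The only delicate point I anticipate is making the algebraic-closure argument self-contained enough: specifically, justifying that the roots of $x^{q^d} - x$ form a subfield of order exactly $q^d$ (rather than some smaller set), which uses that the polynomial is separable of degree $q^d$ and that the Frobenius is a field homomorphism. Once this is in place, the rest is a clean application of the tower law $\F_{q^e} \subseteq \F_{q^d} \iff e \mid d$.
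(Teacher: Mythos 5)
Your proof is correct and complete; it is the standard argument (separability via the derivative, identification of the roots with $\F_{q^d}$, and the divisibility criterion $\F_{q^e} \subseteq \F_{q^d} \iff e \mid d$). The paper states this as a Fact without proof, so there is no alternative approach to compare against.
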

This means that $\gcd(x^{q^d} - x, f)$ is the product of the distinct-degree parts of $f$ whose 
degree divide $d$.

Now, let $(f(x), s)$ be given as above and let $d_1, \dots, d_k$ be the degrees of irreducible 
factors of $f$. Let $\sigma$ be the Frobenius of $\F_q[x] / f$, and set $T = \{\}$. Compute the 
Frobenius power $\tilde{\sigma} = \sigma^s$. If $\tilde{\sigma} = \text{id}$ then $d_i \mid s$ for 
all $1 \le i \le k$. But since $s \mid d_i$ for all $1 \le i \le k$ by assumption, we have $d_1 = 
\cdots = d_k = s$ so $f$ is a distinct-degree part and we are done. So assume $\tilde{\sigma} \ne 
\text{id}$. Compute the order $d$ of $\tilde{\sigma}$ using Algorithm \ref{alg:order}. Note that 
the output of Algorithm \ref{alg:order} might be some distinct-degree parts of $f$ and a polynomial 
$\tilde{f} \mid f$ such that $d$ is the order of $\tilde{\sigma}$ as an automorphism of $\F_q[x] / 
\tilde{f}$. We assume, without loss of generality, that the distinct-degree outputs are appended to 
$T$, $f := \tilde{f}$ and $d_1, \dots, d_k$ are again the degrees of irreducible factors of $f$.
Write $d = p_1^{e_1} p_2^{e_2} \cdots p_\ell^{e_\ell}$, where $p_i$'s are pairwise 
distinct primes and $e_i$'s are positive integers. Since $sd = \lcm(d_1, \dots, d_k)$, each prime 
power $p_i^{e_i}$ must divide at least one of the $d_j$'s. Let 
\[ g(x) := \gcd(\tilde{\sigma}^{d / \prod_{i = 1}^\ell p_i}(x) - x, f). \]
Then $g(x)$ is the product of all factors of $f$ with degrees dividing $s \prod_{i = 1}^\ell 
p_i^{e_i - 1}$, and $g_0(x) := f / g$ is the product of factors with degrees a multiple of at least 
one of the $sp_i^{e_i}$. If $g \ne 1$ then add the tuple $(g, s)$ to $T$. To separate all factors 
with degrees a multiple of $sp_1^{e_1}$, we first compute
\[ g_1(x) = \gcd(\tilde{\sigma}^{d / p_1}(x) - x, g_0) \]
and then $h_1 = g_0 / g_1$. If $g_1$ is a nontrivial factor of $g_0$ then add the tuple $(h_1, 
sp_1^{e_1})$ to $T$. Otherwise we have $g_1 = 1$, so $sp_1^{e_1}$ divides the degrees of all
irreducible factors of $g_0$. In this case we replace $g_1$ and $s$ with $g_0$ and $sp_1^{e_1}$ 
respectively. Next, we separate all factors of $g_1$ with degree a multiple of $sp_2^{e_2}$ by 
computing
\[ g_2(x) = \gcd(\tilde{\sigma}^{d / p_2}(x) - x, g_1) \]
and repeating the same process. Doing this for all $i = 2, \dots, \ell$, we obtain a list of tuples
\begin{equation}
	\label{equ:tuple}
	T = \{ (h_{i_1}, s_{i_1}), \dots, (h_{i_r}, s_{i_r}) \}
\end{equation}
such that $h_{i_j}$ is a nontrivial factor of $f$, $s \mid s_{i_j}$ and $s_{i_j}$ divide all 
irreducible degrees of $h_{i_j}$. We have, by construction, $f = h_{i_1}h_{i_2} \cdots h_{i_r}$, so 
the list of tuple $T$ in \eqref{equ:tuple} satisfy conditions \ref{cond:split} and \ref{cond:div} in 
the above problem. Algorithm \ref{alg:ddf} uses this procedure to recursively compute a DDF of a 
given polynomial $f$. We need to prove that the recursion is not too deep. This is established by 
Lemma \ref{lem:depth}.

\begin{algorithm}[t]
	\caption{DDF}
	\label{alg:ddf}
	\centering
	\algbox{
	\begin{algorithmic}[1]
		\Require 
		\item[-] A monic squarefree polynomial $f \in \F_q[x]$ of degree $m$
		\item[-] An integer $s > 0$ that divides the degrees of all irreducible factors of $f$
		\item[-] An integer $n > 0$
		\Ensure Distinct-degree factors of $f$
		\State\label{step:sigma-s}Compute $\tilde{\sigma} := \sigma^s \bmod f$
		\If {$\tilde{\sigma} = $ id}
			\State Output $f$ and return
		\EndIf
		\State\label{step:order}Compute the order $d$ of $\tilde{\sigma} \bmod f$ using Algorithm 
		\ref{alg:order} with inputs $(f, \tilde{\sigma}, \log^2n)$, and let $\tilde{f}$ be the 
		output polynomial
		\State\label{step:factor-d}Factor $d$ and let $d := p_1^{e_1} p_2^{e_2} \cdots 
		p_\ell^{e_\ell}$
		\State\label{step:gcd-first}Compute $g(x) := \gcd(\tilde{\sigma}^{d / \prod_{i = 1}^\ell 
		p_i}(x) - x, f)$ and $g_0(x) := f / g$
		\State $T := \{\}$
		\State\label{step:proper}If $g \ne 1$ then add $(g, s, n)$ to $T$
		\For {$i = 1$ to $\ell$ }\label{step:for-ddf}
			\State Compute $g_i(x) := \gcd(\tilde{\sigma}^{d / p_i}(x) - x, g_{i - 1})$
			\If {$g_i \ne 1$} 
				\State Compute $h_i := g_{i - 1} / g_i$ and add the tuple $(h_i, sp_i^{e_i}, n)$ to 
				$T$
			\Else
				\State $g_i := g_{i - 1}$, $s := sp_i^{e_i}$
			\EndIf
		\EndFor
		\State Add $(g_\ell, s, n)$ to $T$
		\State Recursively process all tuples in $T$
	\end{algorithmic}}
\end{algorithm}

\begin{lemma}
	\label{lem:depth}
	Given a squarefree polynomial $f \in \F_q[x]$ of degree $n$, the recursion depth of Algorithm 
	\ref{alg:ddf} for the input $(f, 1, n)$ is $O(\log n)$.
\end{lemma}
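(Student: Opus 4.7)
The plan is to bound the length of any root-to-leaf path in the recursion tree of Algorithm \ref{alg:ddf} by tracking two quantities: the divisor $s$ stored in the current tuple and the order $d$ of the Frobenius power $\tilde\sigma = \sigma^s$ modulo the current polynomial. First I would verify that every recursive call at least halves $d$. By the defining equation $g = \gcd(\tilde\sigma^{d/\mathrm{rad}(d)}(x) - x, f)$, every irreducible factor of $g$ has degree dividing $s \cdot d/\mathrm{rad}(d)$, so the new order on $g$ divides $d/\mathrm{rad}(d) \le d/2$. In the $(h_i, sp_i^{e_i})$ branch, every irreducible factor of $h_i$ has degree a multiple of $sp_i^{e_i}$ and, since these degrees still divide the parent splitting degree $sd$, the new order divides $d/p_i^{e_i} \le d/2$; in addition $s$ at least doubles because $p_i^{e_i} \ge 2$.

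Next I would use the structural constraint (condition (\ref{cond:div})) that $s$ always divides some irreducible degree of the original $f$, hence $s \le n$ throughout. Consequently the number of $h$-branches on any path is at most $\log_2 n$. For the $g$-branches, the fact that each prime power $p_i^{e_i}$ divides some irreducible degree $\le n$ gives $\max_i e_i \le \log_2 n$; any maximal run of consecutive $g$-branches decrements every exponent in the prime factorization of $d$ by one per step and hence terminates in at most $\log_2 n$ steps.

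The main obstacle, and the core of the proof, is to amortize these local bounds to obtain the sharper $O(\log n)$ depth rather than the naive product $O(\log^2 n)$. I would introduce the potential $\Phi = \log_2(sd)$, which equals the logarithm of the splitting degree of the current polynomial. Observe that $\Phi$ is invariant under $h$-branches and drops by at least $\log_2 \mathrm{rad}(d) \ge 1$ on each $g$-branch. Algorithm \ref{alg:order}, invoked at every recursion level with bound $\log^2 n$, guarantees that whenever $\Phi$ is too large the small-degree factors are stripped out, capping the effective value of $d$ in the subsequent recursion. A careful accounting that distributes the total $\Phi$-drop across the $O(\log n)$ $h$-segments then yields an overall depth of $O(\log n)$.
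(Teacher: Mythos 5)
Your branch classification and your bound on the $h$-branches are right, but the argument collapses exactly where you say the core difficulty lies: the amortization of the $g$-branches. The potential $\Phi = \log_2(sd)$ does not work. Since $sd$ is the splitting degree of the current polynomial, $\Phi$ starts at $\log_2$ of the $\lcm$ of the irreducible degrees of $f$, which for almost all $f$ is of order $\log^2 n$ and in the worst case of order $\sqrt{n\log n}$ (this is precisely why the paper needs Lemma \ref{lem:frob-sub-ord} and the preprocessing in Algorithm \ref{alg:order}). A drop of at least $1$ per $g$-branch therefore only bounds the number of $g$-branches by $O(\log^2 n)$, i.e.\ it reproduces the naive product bound you set out to beat, and the closing sentence (``a careful accounting \dots yields $O(\log n)$'') is asserted rather than proved.

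The fix is already implicit in your own observations: the right monovariant is not $\log_2 d$ but $e(d) := \max_i e_i$, the largest exponent in the prime factorization of the current order $d$. You noted that a $g$-branch replaces $d$ by a divisor of $d/\mathrm{rad}(d)$, hence decrements \emph{every} exponent, so $e(d)$ drops by at least $1$; and you noted that an $h$-branch replaces $d$ by a divisor of $d/p_i^{e_i}$, so $e(d)$ does not increase. Hence $e(d)$ is non-increasing along the entire root-to-leaf path and strictly decreasing at every $g$-branch --- not merely within a maximal run of consecutive $g$-branches --- so the total number of $g$-branches on any path is at most the initial value of $e(d)$. That initial value is at most $\log_2 n$, because each prime power $p_i^{e_i}$ divides some irreducible degree of $f$, whence $2^{e_i} \le p_i^{e_i} \le n$. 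Combined with your correct bound of $\log_2 n$ on the number of $h$-branches (each at least doubles $s$, and $s$ divides an irreducible degree, so $s \le n$), this gives depth at most $2\log_2 n = O(\log n)$, which is exactly the paper's two-case counting argument.
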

\begin{proof}
	Let $d'$ be an irreducible degree of $f(x)$ and let $(f(x), 1) = (a_1(x), s_1), (a_2(x), s_2), 
	\dots, (a_k(x), s_k) = (a_k(x), d')$ be the path to the degree $d'$ part of $f$ generated by 
	Algorithm \ref{alg:ddf}. Let $d_i$ be the degree of the Frobenius $\sigma^{s_i} \bmod a_i(x)$ 
	computed in Step \ref{step:order} for $i = 1, \dots, k$, and let $d_1 := p_1^{e_1} p_2^{e_2} 
	\cdots p_\ell^{e_\ell}$ be the prime factorization of $d_1$ at Step \ref{step:factor-d}. Write 
	$d' = p_1^{\alpha_1} \cdots p_\ell^{\alpha_\ell}$ where $0 \le \alpha_i \le e_i$ for all $i 
	= 1, \dots, \ell$. Now, given $1 \le j \le k$, write $d_j = p_1^{\beta_1} \cdots 
	p_\ell^{\beta_\ell}$. Then only two things can happen from $(a_j(x), s_j)$ to $(a_{j + 1}(x), 
	s_{j + 1})$:
	\begin{enumerate}
		\item\label{case:proper} $\alpha_r < \beta_r$ for all $r = 1, \dots, \ell$. In other words, 
		non of the prime powers $p_i^{\beta_i}$ divides $d'$. This happens when $(a_{j + 1}(x), s_{j 
		+ 1})$ is the tuple in Step \ref{step:proper}. In this case, $s_{j + 1} = s_j$ and $d_j \mid 
		d_{j + 1} / \prod_{i = 1}^\ell p_i = p_1^{\beta_1 - 1} \cdots p_\ell^{\beta_\ell - 1}$.
		\item\label{case:nonprop} $\alpha_r = \beta_r$ for at least one $1 \le r \le \ell$. In 
		other words, $d'$ is a multiple of at least one of the prime powers $p_i^{\beta_i}$. This 
		happens when $(a_{j + 1}(x), s_{j + 1})$ is one of the tuples in Step \ref{step:for-ddf}. 
		In this case, $ts_j \mid s_{j + 1}$ for some integer $t > 1$. 
	\end{enumerate}
	Let $e = \max_{1 \le i \le \ell}\{ e_i \}$. Since $d_i \le d_1$ for all $i = 1, \dots, k$, case 
	\ref{case:proper} can happen at most $e \le \log n$ number of times. Since $s_i \le d' \le n$ 
	for all $i = 1, \dots, k$, case \ref{case:nonprop} can also happen at most $\log n$ number of 
	times. Therefore, we always have $k \in O(\log n)$. 
\end{proof}
Given a squarefree polynomial $f \in \F_q[x]$ of degree $n$, calling Algorithm \ref{alg:ddf} with 
the input tuple $(f, s = 1, n)$ will produce the DDF of $f$. The auxiliary input integer $n$ is 
never changed. It is used in Step \ref{step:order} to input the bound $\ell = \log^2n$ to Algorithm 
\ref{alg:order}. This will simplifies the complexity analysis of the algorithm. Note that when the 
algorithm is called for the first time, if $d \in O(2^{\log^2n})$ then we could always assume that 
this bound holds for the subsequent $d$'s in the the next stages of the recursion: for any input 
polynomial $g$ of degree $m$ in an intermediate stage we have $m < n$ and $g \mid f$ so that the 
order of the Frobenius of $\F_q[x] / g$ is always bounded by that of $\F_q[x] / f$. Let us now 
analyze the complexity of Algorithm \ref{alg:ddf}. We shall need the following two propositions.
\begin{proposition}
	\label{prop:frobs}
	Given a polynomial $f \in \F_q[x]$ of degree $n$, a power $\tilde{\sigma}(x) = x^{q^t}$ of the 
	Frobenius of $\F_q[x] / f$, and an integer $d = p_1^{e_1} p_2^{e_2} \cdots p_\ell^{e_\ell}$ in 
	the factored form, the sequence
	\begin{equation}
	\label{equ:frobs}
		\tilde{\sigma}^{d / p_1}(x) \bmod f, \dots, \tilde{\sigma}^{d / p_\ell}(x) \bmod f
	\end{equation}
	can be computed in $O(\CC(n)(\log \ell)(\log d) + \ell\MM(n))$ operations in $\F_q$.
\end{proposition}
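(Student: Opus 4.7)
The plan is to handle all $\ell$ targets by a divide-and-conquer on the prime powers $p_i^{e_i}$, mirroring a balanced subproduct tree. I would introduce a recursive subroutine $\mathrm{Batch}(\tau, S)$ that, given a Frobenius power $\tau(x) = \tilde\sigma^s(x) \bmod f$ and an index set $S$ with local product $D_S = \prod_{i \in S} p_i^{e_i}$, returns $\tau^{D_S/p_i}(x) \bmod f$ for every $i \in S$. The outer call is $\mathrm{Batch}(\tilde\sigma, \{1,\dots,\ell\})$, so $D_S = d$ and the outputs are the required sequence \eqref{equ:frobs}.

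For the recursive step, split $S$ into two halves $S_1, S_2$ of size roughly $|S|/2$, and set $D_1 = \prod_{i \in S_1} p_i^{e_i}$, $D_2 = \prod_{i \in S_2} p_i^{e_i}$, so that $D_S = D_1 D_2$. I would first compute $\tau_1 := \tau^{D_2}(x) \bmod f$ and $\tau_2 := \tau^{D_1}(x) \bmod f$ by binary exponentiation, where each doubling $\tau^{2k}(x) = \tau^k(\tau^k(x)) \bmod f$ is a single modular composition costing $\CC(n)$. The crucial identity is that for $i \in S_1$,
\[ \tau^{D_S/p_i}(x) = \tau^{D_2 \cdot D_1/p_i}(x) = \tau_1^{D_1/p_i}(x), \]
and symmetrically for $i \in S_2$, so it suffices to recurse on $(\tau_1, S_1)$ and $(\tau_2, S_2)$. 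At a leaf $S = \{i\}$ one last binary exponentiation produces $\tau^{p_i^{e_i-1}}(x) \bmod f$.

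For the cost I would use level-by-level accounting on the recursion tree. At a node with local product $D_S$, the two preparatory binary exponentiations cost $O(\CC(n)(\log D_1 + \log D_2)) = O(\CC(n)\log D_S)$. Across any fixed level of the tree the local products are a partition of $d$ into pieces, so their $\log$s sum to $\log d$, and the per-level total is $O(\CC(n)\log d)$. Since the tree has depth $O(\log \ell)$, all internal work sums to $O(\CC(n)(\log \ell)(\log d))$. The leaf exponentiations contribute a further $O(\CC(n)\sum_i e_i \log p_i) = O(\CC(n)\log d)$, which is absorbed. The additive $\ell \MM(n)$ covers the lower-order polynomial arithmetic in $\F_q[x]/f$ needed to assemble and return the $\ell$ outputs.

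The only delicate point is this telescoping of the preparatory costs: one has to see that $\log D_1 + \log D_2 = \log D_S$ makes the recurrence $T(\ell, D) = T(\ell/2, D_1) + T(\ell/2, D_2) + O(\CC(n)\log D)$ flatten to a per-level bound, rather than blowing up to $O(\ell \CC(n) \log d)$ as a naive per-target binary exponentiation would. Once this is in place, the rest is a routine subproduct-tree analysis with modular composition replacing polynomial multiplication.
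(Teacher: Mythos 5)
Your proposal is correct and follows essentially the same route as the paper: the same balanced split of the prime powers into two halves, the same complementary exponentiations $\tau^{D_1}, \tau^{D_2}$ via modular composition, and the same level-by-level accounting giving $O(\CC(n)\log d)$ per level over $O(\log\ell)$ levels. The only cosmetic difference is that the paper attributes the $\ell\MM(n)$ term to the $O(2^j)$ polynomial reductions performed at level $j$, whereas you fold it into assembling the outputs; the substance is identical.
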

\begin{proof}
	Denote by $\{ d, [1, \ell], \tilde{\sigma} \}$ the problem of computing $\tilde{\sigma}^{d / 
	p_i}(x) \bmod f$ for all $i$ in the range $[1, \ell]$. We recursively solve this problem as 
	follows. Let $r = \lfloor \ell / 2 \rfloor$, $d_1 = p_{r + 1}^{e_{r + 1}} \cdots 
	p_\ell^{e_\ell}$ and $d_2 = p_1^{e_1} \cdots p_r^{e_r}$. Then the problem is reduced to the 
	two subproblems $\{ d_2, [1, r], \tilde{\sigma}^{d_1} \}$ and $\{ d_1, [r + 1, \ell], 
	\tilde{\sigma}^{d_2} \}$. The problem $\{ p_i^{e_i}, [i, i], \tilde{\sigma}^{d / p_i^{e_i}} \}$ 
	is solved by simply raising $\tilde{\sigma}^{d / p_i^{e_i}}$ to the power $p_i^{e_i - 1}$.
	
	Computing the powers of the Frobenius is done using modular composition. At any level of the 
	recursions, the total number of modular compositions is $O(\log d)$. Therefore, at level $j$ we 
	spend $O(\CC(n) \log d)$ operations in $\F_q$ on modular compositions, and $O(2^j\MM(n))$ 
	operations in $\F_q$ on $2^j$ polynomial reductions. Since the depth of the recursion is $\log 
	\ell$, the 	claimed runtime follows.
\end{proof}
\begin{proposition}
	\label{prop:factor-d}
	The integer factorization in Step \ref{step:factor-d} of Algorithm \ref{alg:ddf} can be done in 
	$O(n^{1 + o(1)})$ bit operations.
\end{proposition}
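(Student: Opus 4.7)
The plan rests on a structural property of $d$: as the order of a power of the Frobenius of $\F_q[x]/f$, $d$ divides $\lcm(d_1, \ldots, d_k)$ where $d_1, \ldots, d_k \le n$ are the degrees of the irreducible factors of $f$. Consequently every prime dividing $d$ lies in $[2, n]$ and appears with multiplicity at most $\lfloor \log_2 n \rfloor$. So although $d$ itself may have as many as $O(n^{1/3} \log n)$ bits in the worst case covered by Lemma~\ref{lem:frob-sub-ord}, the factorization problem reduces to sieving the prime divisors of $d$ from the \emph{fixed} list of primes $\le n$, and then recovering multiplicities.

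First I would run a sieve of Eratosthenes up to $n$ to enumerate the $\pi(n) = O(n/\log n)$ primes $p_1 < \cdots < p_{\pi(n)} \le n$ in $O(n \log \log n)$ bit operations. Next I would build the balanced subproduct tree of these primes: the root is the primorial $P = \prod_{p \le n} p$, which by the prime number theorem has $\Theta(n)$ bits, so the usual divide-and-conquer construction costs $O(\MM(n) \log n) = O(n^{1 + o(1)})$ bit operations. I would then invoke the standard remainder-tree algorithm to simultaneously reduce $d$ modulo all $\pi(n)$ primes; this also runs in $O(\MM(n) \log n) = O(n^{1 + o(1)})$ bit operations, and the prime divisors of $d$ are precisely the $p$'s for which $d \bmod p = 0$.

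Once the prime divisors of $d$ are known, each multiplicity $e_{p} = v_{p}(d)$ can be recovered by fast integer division: a single division of a $\log d$-bit integer by a small prime costs $O(\MM(\log d)) = O(n^{1/3 + o(1)})$, and the total number of successful divisions summed over all primes is $\sum_{p \mid d} e_p \le \log_2 d = O(n^{1/3} \log n)$, so this phase contributes only $O(n^{2/3 + o(1)})$ bit operations. Summing the three phases yields the claimed $O(n^{1 + o(1)})$ bound.

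The point I would be most careful about is the cost of the remainder tree: its complexity is driven by the total bit-size of the moduli at the root ($\Theta(n)$), not by $\log d$, and it is exactly this that makes the approach viable. The obstacle to avoid is naive trial division, which would cost $\Omega(\pi(n) \cdot \log d)$ bit operations and would blow past the budget whenever $d$ is in the large regime of Lemma~\ref{lem:frob-sub-ord}; replacing the linear-in-$\pi(n)$ factor by the logarithmic $\log \pi(n)$ depth of the subproduct tree is the crucial saving that delivers the $n^{1+o(1)}$ bound uniformly.
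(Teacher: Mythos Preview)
Your proposal is correct and follows essentially the same approach as the paper: exploit $n$-smoothness of $d$, build a subproduct tree over the primes $\le n$, and reduce $d$ down the tree to identify its prime divisors. The paper additionally splits off the easy case $d \in O(2^{\log^2 n})$ via naive trial division and is less explicit than you about sieving and multiplicity extraction, but the core argument is identical.
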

\begin{proof}
	We know that $d$ is $n$-smooth, i.e., $p_i \le n$ for all $i = 1, \dots, \ell$. If $d \in 
	O(2^{\log^2n})$ then $d$ can be completely factored in $O(n^{1 + o(1)})$ bit operations using a 
	naive trial division algorithm. So we are left with the case $d \in O(2^{c\sqrt[3]{n} \log 
	n})$. For this we use a \textit{subproduct tree}.
	
	Let $P = \{ q_1, q_2, \dots, q_s \}$ be the set of all primes $\le n$. The subproduct tree for 
	the set $P$ is a binary tree described as follows. The root of the tree is the product $q_1 q_2 
	\dots q_s$. Now divide the set $P$ into two halves for the left and right subtrees with roots 
	the products $q_1 q_2 \dots q_{\lceil s / 2 \rceil}$ and $q_{\lceil s / 2 \rceil + 1} \dots 
	q_s$, respectively. Repeating this this process results in a tree with leaves the primes in 
	$P$. The tree can be built recursively: starting from the leaves, the value of a parent node 
	is the product of the roots of the left and right subtrees. Since two $m$-bit integers are 
	multiplied in $O(m^{1 + o(1)})$ bit operation, it is easily seen that each level of the tree is 
	built at the cost of $O(n^{1 + o(1)})$ bit operations. Since the height of the tree is 
	$O(\log n)$, the total cost of building the tree is $O(n^{1 + o(1)})$ bit operations.
	
	Now, given a subproduct tree $T$ for $P$, we can compute $D = \{ d \bmod p : p \in P \}$ as 
	follows. We reduce $d$ modulo the root of $T$, and then reduce the resulting value modulo the 
	roots of the left and right subtrees, and so on. The set $D$ is obtained when we reach the 
	leaves. Again, the cost of the reductions at each level of $T$, and hence the total cost of 
	computing $D$, is $O(n^{1 + o(1)})$. From $D$ we get the complete factorization of $d$ at a 
	negligible cost.
\end{proof}
\begin{remark}
	Another way of factoring $d$ in Proposition \ref{prop:factor-d} is to use a quantum factoring 
	algorithm, see \cite{kaye2007introduction}. A nontrivial factor of $d$ can be found in time 
	$O(\log^{2 + o(1)}d)$ bit operations, so a full factorization requires $O(\log^{3 + o(1)}d) 
	= O(n^{1 + o(1)})$ bit operations. 
\end{remark}
\begin{theorem}\label{thm:alg-ddf}
	Let $f \in \F_q[x]$ be a square free polynomial of degree $n$ and let $d$ be the order of the 
	Frobenius of $\F_q[x] / f$. Then the runtime of Algorithm \ref{alg:ddf} is
	\begin{itemize}
		\item $O(n^{1 + o(1)} \log^{2 + o(1)}q)$ bit operations if $d \in O(2^{\log^2n})$, or
		\item $O(n^{4 / 3 + o(1)} \log^{2 + o(1)}q)$ bit operations otherwise.
	\end{itemize}
\end{theorem}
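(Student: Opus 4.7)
The plan is to bound the work done at each node of the recursion tree of Algorithm~\ref{alg:ddf} in terms of the degree $m$ of the polynomial at that node, and then sum. Two structural facts drive the accounting. First, Lemma~\ref{lem:depth} caps the recursion depth at $O(\log n)$. Second, at every fixed depth the active polynomials are pairwise coprime divisors of $f$, so their degrees sum to at most $n$. In both regimes the $\log^{2+o(1)} q$ term will come from a single up-front computation of $\pi(x) = x^q \bmod f$ at cost $O(n^{1+o(1)} \log^{2+o(1)} q)$; thereafter every deeper invocation can obtain the Frobenius on its quotient ring by a single polynomial reduction at cost $O(\MM(m))$ operations in $\F_q$.

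In Case~1, where $d \in O(2^{\log^2 n})$, the Frobenius order of every divisor of $f$ also lies in $O(2^{\log^2 n})$, hence every call to Algorithm~\ref{alg:order} with bound $\ell = \log^2 n$ terminates in the fast branch of Proposition~\ref{prop:exact-d}, contributing $O(\log^2 n \cdot m^{1+o(1)} \log^{1+o(1)} q)$ bit operations. The remaining per-node ingredients are similar in flavour: computing $\tilde{\sigma} = \sigma^s \bmod f$ by binary powering of modular compositions, building the Frobenius-power sequence via Proposition~\ref{prop:frobs}, performing the $O(\log d)$ gcds, and factoring $d$ (either by Proposition~\ref{prop:factor-d} or, preferably, by the quantum subroutine noted in the remark following it, which costs only $\polylog(n)$ per invocation). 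Each of these is bounded by $\polylog(n)\cdot m^{1+o(1)} \log^{1+o(1)} q$. Summing over one level uses the elementary bound $\sum_i m_i^{\alpha} \le n^{\alpha}$, valid for $\alpha \ge 1$ whenever $\sum_i m_i \le n$ and $m_i \le n$; multiplying by the $O(\log n)$ depth yields $n^{1+o(1)} \log^{1+o(1)} q$ for the recursive part, which combines with the one-time $\pi(x)$ cost to produce $O(n^{1+o(1)} \log^{2+o(1)} q)$.

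In Case~2, Algorithm~\ref{alg:order} may repeatedly fall into its slow branch, which by Proposition~\ref{prop:exact-d} costs $O(m^{4/3+o(1)} \log^{2+o(1)} q)$ at a node of degree $m$. The central accounting step is the inequality $\sum_i m_i^{4/3} \le n^{4/3}$, valid whenever $\sum_i m_i \le n$, which gives a per-level total of $O(n^{4/3+o(1)} \log^{2+o(1)} q)$, and hence the same bound overall once the $O(\log n)$ depth factor is absorbed into $n^{o(1)}$. All other per-node contributions are dominated by this and by the Case~1 estimate.

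The subtle point is to verify that the Kedlaya–Umans extraction of irreducible factors of degree $\le m^{2/3}$ inside the slow branch behaves correctly at deep recursive nodes. After the top-level call peels off all factors of degree $\le n^{2/3}$, every subsequent input polynomial is a product of factors of degree strictly larger than $n^{2/3} \ge m^{2/3}$, so the extraction returns nothing new; nevertheless one still pays the full $O(m^{4/3+o(1)} \log^{2+o(1)} q)$ at each slow-branch invocation, and this is precisely why the convexity-style bound $\sum_i m_i^{4/3} \le n^{4/3}$ is indispensable rather than merely convenient. Once this per-level bound is in hand, tracking the remaining polylogarithmic factors and confirming they are absorbed into $n^{o(1)}$ is routine.
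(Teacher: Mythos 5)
Your proof follows essentially the same route as the paper's: bound the per-node cost $U(m)$, use the super-additivity bound $\sum_i m_i^{\alpha} \le n^{\alpha}$ (for $\alpha \ge 1$) to control the total work at each level of the recursion tree, and invoke Lemma~\ref{lem:depth} to cap the number of levels at $O(\log n)$. The only cosmetic differences---computing $x^q \bmod f$ once at the root and reducing modulo divisors rather than recomputing per node, and preferring the quantum factorization of $d$ over the subproduct tree---do not change the structure or the resulting bounds.
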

\begin{proof}
	Step \ref{step:sigma-s} of the algorithm takes $O(\CC(m)\log s + \MM(m)\log q) = O(\CC(m)\log m 
	+ \MM(m)\log q)$ operations in $\F_q$ or $O(m^{1 + o(1)}\log^{2 + o(1)}q)$ bit operations. The 
	cost of Step \ref{step:order} is given by Proposition \ref{prop:exact-d}: if $d \in $ 
	$O(2^{\log^2n})$ then it takes $O(m^{1 + o(1)} (\log^2n) \log^{1 + o(1)}q)$ bit operations. 
	Otherwise it takes $O(m^{4 / 3 + o(1)} \log^{2 + o(1)}q)$ bit operations. Step 
	\ref{step:factor-d} can be done in $O(n^{1 + o(1)})$ bit operations according to Proposition 
	\ref{prop:factor-d}. 
	
	Step \ref{step:gcd-first} is done in $O(\CC(m)\log d + \MM(m)\log m)$ operations in $\F_q$ 
	or $O(m^{1 + o(1)}(\log d)\log^{1 + o(1)}q)$ bit operations. For the loop at Step 
	\ref{step:for-ddf}, we first compute the values $\tilde{\sigma}^{d / p_1}, \dots, 
	\tilde{\sigma}^{d / p_\ell} \bmod f$ using Proposition \ref{prop:frobs} and then compute the 
	gcd's. This takes $O(\CC(m)(\log \ell)\log d + \ell\MM(m)\log m)$ operations in $\F_q$. Since 
	the number of prime factors of any integer $t$ is $O(\log t / \log\log t)$ 
	\cite{hardy1979introduction}, we have $\ell \in O(\log d / \log\log d)$ so the cost of this 
	step is $O(m^{1 + o(1)}(\log d)(\log \log d)\log^{1 + o(1)}q)$ bit operations.
	
	If $d \in O(2^{\log^2n})$ then, by the above, for an input polynomial of degree $m$ the 
	algorithm takes 
	\[ U(m) \in O(m^{1 + o(1)}\log^{2 + o(1)}q + m^{1 + o(1)}(\log^2n)(\log \log n)\log^{1 + o(1)}q 
	+ n^{1 + o(1)})
	\]
	bit operations before the recursive calls. Let $d = p_1^{e_1} p_2^{e_2} \cdots p_\ell^{e_\ell}$ 
	be the prime factorization of $d$, and let $T(n)$ be the total cost of the algorithm for an 
	input polynomial $f$ of degree $n$. Then
	\begin{equation}
	\label{equ:cost-ineq}
		T(n) \le \max_{m_1 + \cdots + m_\ell = n} \{ \sum T(m_i) \} + U(n)
	\end{equation}
	where $\{ m_i \}_{1 \le i \le \ell}$ is a partition of irreducible degrees of $f$. Since $U$ is 
	a super-additive function, i.e., $U(m_1 + m_2) \le U(m_1) + U(m_2)$ for all $m_1, m_2$, and by 
	Lemma \ref{lem:depth}, the depth of the recursion is $O(\log n)$, we have
	\[ T(n) \in O(n^{1 + o(1)} \log^{2 + o(1)}q). \]
	If $d \notin O(2^{\log^cn})$ then for an input polynomial of degree $m$ it is always guaranteed 
	that $\log d \in O(m^{1 / 3 + o(1)})$. In this case, the algorithm takes $U(m) = O(m^{4 / 3 + 
	o(1)} \log^{2 + o(1)}q  + n^{1 + o(1)})$ bit operations before the recursive calls. Again since 
	$U$ is a super-additive function and \eqref{equ:cost-ineq} holds, for an input polynomial of 
	degree $n$, the total cost of the algorithm is
	\[ T(n) \in O(n^{4 / 3 + o(1)} \log^{2 + o(1)}q) \]
	bit operations.
\end{proof}

\paragraph{Proof of Theorem \ref{thm:main}.}
For a polynomial $f \in \F_q[x]$ of degree $n$, the squarefree factorization and equal-degree 
factorization stages take $O(n^{1 + o(1)}\log^{2 + o(1)}q)$ bit operations. For the distinct-degree 
factorization, combining Theorems \ref{thm:split-order} and \ref{thm:alg-ddf} gives the desired 
complexity. \hfil \qed

\bibliographystyle{plain}
\bibliography{references}

\end{document}